\definecolor{orange}{rgb}{.8,.4,0}
\def\RR{\mathbb{R}}
\def\id{\mathrm{id}}
\def\etal{\emph{et al.}}
\def\ie{\emph{i.e.}}
\def\eg{\emph{e.g.}}
\def\etc{\emph{etc.}}
\newcommand{\appn}[1]{Appendix~\ref{app:#1}}
\newcommand{\eqn}[1]{Eq.~(\ref{eq:#1})}
\newcommand{\figr}[1]{Figure~\ref{fig:#1}}
\newcommand{\figrs}[2]{Figures~\ref{fig:#1} and \ref{fig:#2}}
\newcommand{\sect}[1]{Section~\ref{sec:#1}}
\newcommand{\sects}[2]{Sections~\ref{sec:#1} and \ref{sec:#2}}
\newcommand{\algo}[1]{Algorithm~\ref{alg:#1}}
\newcommand{\tabl}[1]{Table~\ref{tab:#1}}
\newcommand{\lemm}[1]{Lemma~\ref{lem:#1}}
\begin{document}

\markboth{K. Crane et al.}{Geodesics in Heat}

\title{Geodesics in Heat}

\author{KEENAN CRANE
\affil{Caltech}
CLARISSE WEISCHEDEL, MAX WARDETZKY
\affil{University of G\"{o}ttingen}}

\category{I.3.7}{Computer Graphics}{Computational Geometry and Object Modeling}[Geometric algorithms, languages, and systems]

\terms{digital geometry processing, discrete differential geometry, geodesic distance, distance transform, heat kernel}

\maketitle

\begin{bottomstuff} 
\end{bottomstuff}

\begin{abstract} 
We introduce the \emph{heat method} for computing the shortest geodesic distance to a specified subset (\eg, point or curve) of a given domain.  The heat method is robust, efficient, and simple to implement since it is based on solving a pair of standard linear elliptic problems.  The method represents a significant breakthrough in the practical computation of distance on a wide variety of geometric domains, since the resulting linear systems can be prefactored once and subsequently solved in near-linear time.  In practice, distance can be updated via the heat method an order of magnitude faster than with state-of-the-art methods while maintaining a comparable level of accuracy.  We provide numerical evidence that the method converges to the exact geodesic distance in the limit of refinement; we also explore smoothed approximations of distance suitable for applications where more regularity is required.
\end{abstract}

\section{Introduction}
\label{sec:intro}

Imagine touching a scorching hot needle to a single point on a surface.  Over
time heat spreads out over the rest of the domain and can be described by a
function \(k_{t,x}(y)\) called the \emph{heat kernel}, which measures the heat
transferred from a source \(x\) to a destination \(y\) after time \(t\).  A well-known
relationship between heat and distance is Varadhan's
formula~\shortcite{Varadhan:1967:OTB}, which says that the geodesic distance \(\phi\)
between any pair of points \(x,y\) on a Riemannian manifold can be recovered via
a simple pointwise transformation of the heat kernel: \begin{equation} \phi(x,y) = \lim_{t \rightarrow 0} \sqrt{-4t
\log k_{t,x}(y)}. \label{eq:varadhan}\end{equation} The intuition behind this behavior
stems from the fact that heat diffusion can be modeled as a large collection of
hot particles taking random walks starting at \(x\): any particle that
reaches a distant point \(y\) after a small time \(t\) has had little time to deviate
from the shortest possible path.  To date, however, this relationship has not been
exploited by numerical algorithms that compute geodesic distance.

\begin{figure}
\begin{center}
\includegraphics[width=.75\columnwidth]{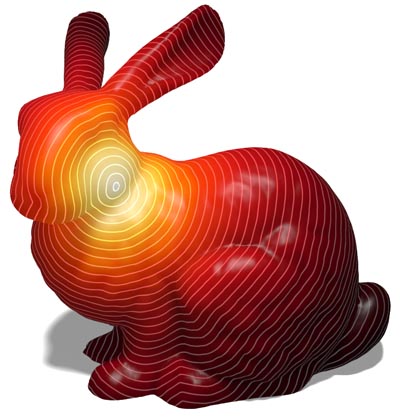}
\end{center}
\caption{Geodesic distance on the Stanford Bunny.  The heat method allows distance to be rapidly updated for new source points or curves.}
\label{fig:bunny}
\end{figure}

Why has Varadhan's formula been overlooked in this context?  The main reason,
perhaps, is that it requires a precise numerical reconstruction of the heat
kernel, which is difficult to obtain for small values of \(t\) --
applying the formula to a mere approximation of \(k_{t,x}\) does not yield
the correct result, as illustrated in \figrs{motivation}{logvsheat}.  The
main idea behind the heat method is to circumvent this issue by working
with a broader class of inputs, namely any function whose gradient is parallel
to geodesics.  We can then separate the computation of distance into
two separate stages: first compute the gradient of the distance field, then
recover the distance itself.

\begin{figure}
\begin{center}
\includegraphics{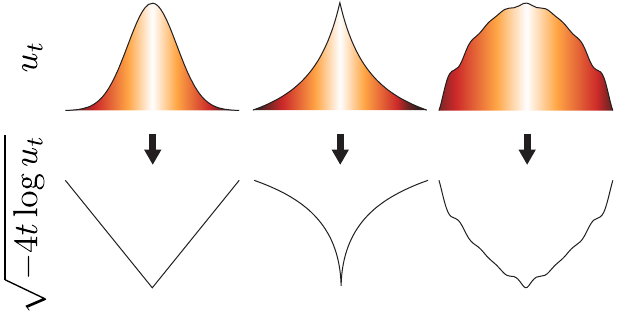}
\end{center}
\caption{Given an \emph{exact} reconstruction of the heat kernel \emph{(top left)} Varadhan's formula can be used to recover geodesic distance \emph{(bottom left)} but fails in the presence of approximation or numerical error \emph{(middle, right)}, as shown here for a point source in 1D.  The robustness of the heat method stems from the fact that it depends only on the \emph{direction} of the gradient.}
\label{fig:motivation}
\end{figure}

Relative to existing algorithms, the heat method offers two major advantages.
First, it can be applied to virtually any type of geometric discretization,
including regular and irregular grids, polygonal meshes, and even
unstructured point clouds. Second, it involves only the solution of sparse
linear systems, which can be prefactored once and rapidly re-solved many
times.  This feature makes the heat method particularly valuable for
applications such as shape matching, path planning, and level set-based
simulation (\eg, free-surface fluid flows), which require repeated distance
queries on a fixed geometric domain.  Moreover, because linear elliptic
equations are widespread in scientific computing, the heat method can
immediately take advantage of new developments in numerical linear algebra
and parallelization.

\section{Related Work}
\label{sec:related}

The prevailing approach to distance computation is to solve the \emph{eikonal equation}
\begin{equation}
\label{eq:eikonal}
|\nabla \phi| = 1
\end{equation}
subject to boundary conditions \(\phi|_{\gamma} = 0\) over some subset \(\gamma\) of the domain.  This formulation is \emph{nonlinear} and \emph{hyperbolic}, making it difficult to solve directly.  Typically one applies an iterative relaxation scheme such as Gauss-Seidel -- special update orders are known as \emph{fast marching} and \emph{fast sweeping}, which are some of the most popular algorithms for distance computation on regular grids~\cite{Sethian96} and triangulated surfaces~\cite{KimmelSethian98}.  These algorithms can also be used on implicit surfaces~\cite{MemoliSapiro01}, point clouds~\cite{MemoliSapiro05}, and polygon soup~\cite{Campen:2011:WOB}, but only indirectly: distance is computed on a simplicial mesh or regular grid that approximates the original domain.  Implementation of fast marching on simplicial grids is challenging due to the need for nonobtuse triangulations (which are notoriously difficult to obtain) or else a complex unfolding procedure to preserve monotonicity of the solution; moreover these issues are not well-studied in dimensions greater than two.  Fast marching and fast sweeping have asymptotic complexity of \(O(n \log n)\) and \(O(n)\), respectively, but sweeping is often slower due to the large number of sweeps required to obtain accurate results~\cite{Hysing:2005:TEE}.

The main drawback of these methods is that they do not reuse information: the distance to different subsets \(\gamma\) must be computed entirely from scratch each time.  Also note that both sweeping and marching present challenges for parallelization: priority queues are inherently serial, and irregular meshes lack a natural sweeping order.  Weber \etal~\shortcite{Weber08} address this issue by decomposing surfaces into regular grids, but this decomposition resamples the surface and requires a low-distortion parameterization that may be difficult to obtain (note that the heat method would also benefit from such a decomposition).

In a different development, Mitchell \etal~\shortcite{Mitchell87} give an \(O(n^2 \log n)\) algorithm for computing the exact polyhedral distance from a single source to all other vertices of a triangulated surface.  Surazhsky \etal~\shortcite{Surazhsky05} demonstrate that this algorithm tends to run in sub-quadratic time in practice, and present an approximate \(O(n \log n)\) version of the algorithm with guaranteed error bounds; Bommes and Kobbelt~\shortcite{BommesKobbelt07} extend the algorithm to polygonal sources.  Similar to fast marching, these algorithms propagate distance information in wavefront order using a priority queue, again making them difficult to parallelize.  More importantly, the amortized cost of these algorithms (over many different source subsets \(\gamma\)) is substantially greater than for the heat method since they do not reuse information from one subset to the next.  Finally, although \cite{Surazhsky05} greatly simplifies the original formulation, these algorithms remain challenging to implement and do not immediately generalize to domains other than triangle meshes.

\begin{figure}[t]
\begin{center}
\includegraphics[width=.52\columnwidth]{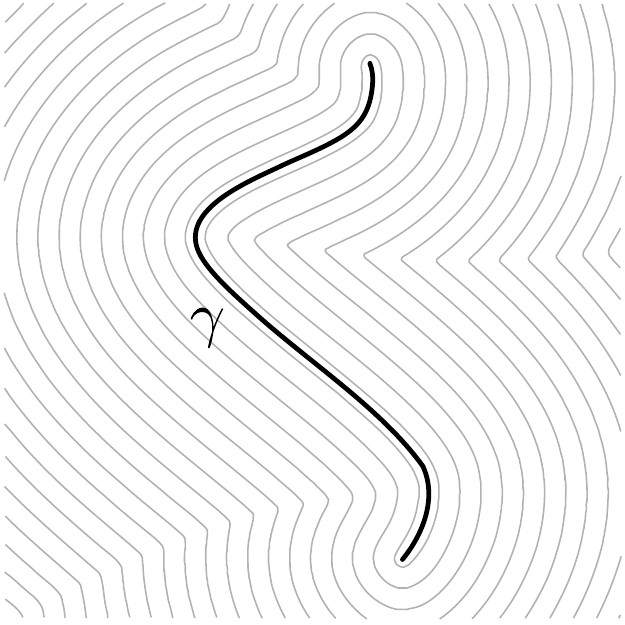}
\end{center}
\caption{The heat method computes the shortest distance to a subset \(\gamma\) of a given domain.  Gray curves indicate isolines of the distance function.}
\label{fig:curve}
\end{figure}

Closest to our approach is the recent method of Rangarajan and Gurumoorthy~\shortcite{Rangarajan:2011:AFE}, who do not appear to be aware of Varadahn's formula -- they instead derive an analogous relationship \(\phi = -\sqrt{\hbar\log \psi}\) between the distance function and solutions \(\psi\) to the time-independent Schr\"{o}dinger equation.  We emphasize, however, that this derivation applies only in \(\RR^n\) where \(\psi\) takes a special form -- in this case it may be just as easy to analytically invert the Euclidean heat kernel \(u_{t,x} = (4 \pi t)^{-n/2} e^{-\phi(x,y)^2/4t}\).  Moreover, they compute solutions using the fast Fourier transform, which limits computation to regular grids.  To obtain accurate results their method requires either the use of arbitrary-precision arithmetic or a combination of multiple solutions for various values of \(\hbar\); no general guidance is provided for determining appropriate values of \(\hbar\).

Finally, there is a large literature on \emph{smooth distances}~\cite{Coifman06,Fouss:2007:RWC,Lipman:2010:BD}, which are valuable in contexts where differentiability is required.  However, existing smooth distances may not be appropriate in contexts where the \emph{geometry} of the original domain is important, since they do not attempt to approximate the original metric and therefore substantially violate the unit-speed nature of geodesics (\figr{biharmonic}).  Interestingly enough, these distances also have an interpretation in terms of simple discretizations of heat flow -- see \sect{smoothed} for further discussion.

\begin{figure}[b]
\begin{center}
\includegraphics[width=\columnwidth]{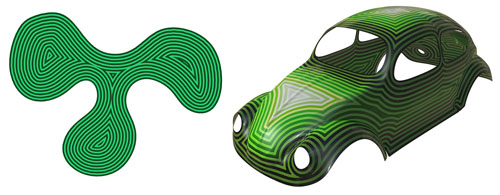}
\end{center}
\caption{Distance to the boundary on a region in the plane (left) or a surface in \(\mathbb{R}^3\) is achieved by simply placing heat along the boundary curve.  Note good recovery of the \emph{cut locus}, \ie, points with more than one closest point on the boundary.\label{fig:boundary}}
\end{figure}

\section{The Heat Method}
\label{sec:algorithm}

Our method can be described purely in terms of operations on smooth manifolds; we explore spatial and temporal discretization in \sects{temporal}{spatial}, respectively. Let \(\Delta\) be the negative-semidefinite Laplace--Beltrami operator acting on  (weakly) differentiable real-valued functions over a Riemannian manifold \((M,g)\).  The heat method consists of three basic steps:

\begin{algorithm}
\caption{The Heat Method}
\label{eq:fpm}
\begin{enumerate}[I.]
\item Integrate the heat flow \(\dot{u} = \Delta u\) for some fixed time \(t\).
\item Evaluate the vector field \(X = -\nabla u / |\nabla u|\).
\item Solve the Poisson equation \(\Delta \phi = \nabla \cdot X\).
\end{enumerate}
\label{alg:heatmethod}
\end{algorithm}

\begin{figure}[t]
\begin{center}
\includegraphics[width=\columnwidth]{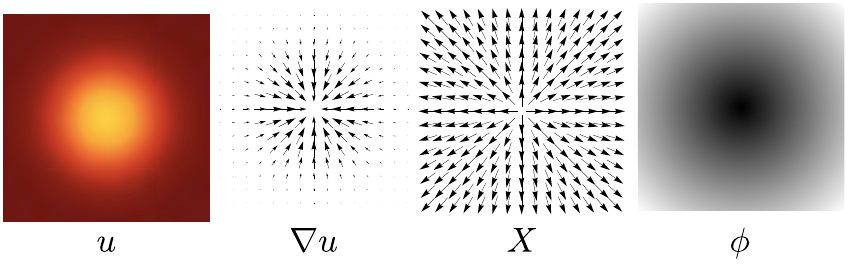}
\end{center}
\caption{Outline of the heat method. (I) Heat \(u\) is allowed to diffuse for a brief period of time \emph{(left)}. (II) The temperature gradient \(\nabla u\) \emph{(center left)} is normalized and negated to get a unit vector field \(X\) \emph{(center right)} pointing along geodesics. (III) A function~\(\phi\) whose gradient follows \(X\) recovers the final distance \emph{(right)}.}
\label{fig:algorithm}
\end{figure}

The function \(\phi\) approximates geodesic distance, approaching the true distance as \(t\) goes to zero (\eqn{varadhan}).  Note that the solution to step III is unique only up to an additive constant -- final values simply need to be shifted such that the smallest distance is zero.  Initial conditions \(u_0 = \delta(x)\) (\ie, a Dirac delta) recover the distance to a single source point \(x \in M\) as in \figr{bunny}, but in general we can compute the distance to any piecewise submanifold \(\gamma\) by setting \(u_0\) to a generalized Dirac~\cite{Villa:2006:MGM} over \(\gamma\) (see \figrs{curve}{boundary}).

The heat method can be motivated as follows.  Consider an approximation \(u_t\) of heat flow for a fixed time \(t\).  Unless \(u_t\) exhibits \emph{precisely} the right rate of decay, Varadhan's transformation \(u_t \mapsto \sqrt{-4t \log u_t}\) will yield a poor approximation of the true geodesic distance \(\phi\) because it is highly sensitive to errors in magnitude (see \figrs{motivation}{logvsheat}).  The heat method asks for something different: it asks only that the gradient \(\nabla u_t\) points in the right direction, \ie, parallel to \(\nabla\phi\).  Magnitude can safely be ignored since we know (from the eikonal equation) that the gradient of the true distance function has unit length.  We therefore compute the normalized gradient field \(X = -\nabla u/|\nabla u|\) and find the closest scalar potential \(\phi\) by minimizing \(\int_M|\nabla \phi - X|^2\), or equivalently, by solving the corresponding Euler-Lagrange equations \(\Delta \phi = \nabla \cdot X\)~\cite{Schwarz:1995:HDM}.  The overall procedure is depicted in \figr{algorithm}.

\begin{figure}
\begin{center}
\includegraphics[width=.75\columnwidth]{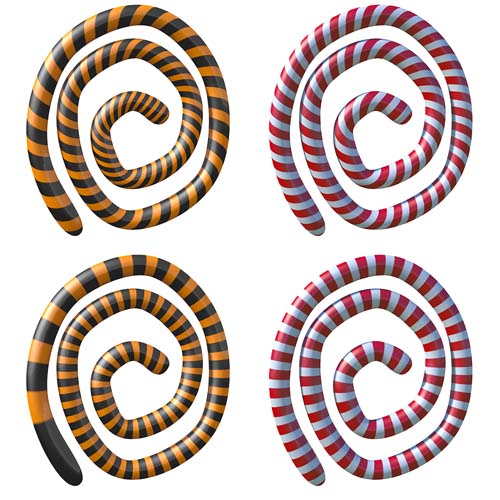}
\end{center}
\caption{\emph{Left:} Varadhan's formula. \emph{Right:} the heat method.  Even for very small values of \(t\), simply applying Varadhan's formula does not provide an accurate approximation of geodesic distance \emph{(top left)}; for large values of \(t\) spacing becomes even more uneven \emph{(bottom left)}.  Normalizing the gradient results in a more accurate solution, as indicated by evenly spaced isolines \emph{(top right)}, and is also valuable when constructing a smoothed distance function \emph{(bottom right)}.}
\label{fig:logvsheat}
\end{figure}

\subsection{Time Discretization}
\label{sec:temporal}

We discretize the heat equation from step I of \algo{heatmethod} in time using a single backward Euler step for some fixed time \(t\).  In practice, this means we simply solve the linear equation
\begin{align}\label{eq:IE-step-1}
(\id - t\Delta) u_t = u_0
\end{align}
over the entire domain \(M\), where \(\id\) is the identity (here we still consider a smooth manifold; spatial discretization is discussed in \sect{spatial}).  Note that backward Euler provides a maximum principle, preventing spurious oscillations in the solution~\cite{Wade05}.  We can get a better understanding of solutions to \eqn{IE-step-1} by considering the elliptic boundary value problem
\begin{align}\label{eq:BVP-IE}
\begin{split}
(\id-t\Delta)v_t &= 0 \quad\text{on}\quad M\backslash\gamma\\
v_t &=1 \quad\text{on}\quad \gamma \ .
\end{split}
\end{align}
which for a point source yields a solution \(v_t\) equal to \(u_t\) up to a multiplicative constant.  As established by Varadhan in his proof of \eqn{varadhan}, \(v_t\) also has a close relationship with distance, namely
\begin{align}\label{eq:conv-Varadhan-2}
\lim_{t\to 0} -\tfrac{\sqrt{t}}{2}\log v_t = \phi.
\end{align}
This relationship ensures the validity of steps II and III since the transformation applied to \(v_t\) preserves the direction of the gradient.

\subsection{Spatial Discretization}
\label{sec:spatial}

In principle the heat method can be applied to any domain with a discrete gradient (\(\nabla\)), divergence (\(\nabla \cdot\)) and Laplace operator (\(\Delta\)).  Here we investigate several possible discretizations.

\subsubsection{Simplicial Meshes}
\label{sec:simplicial}

Let \(u \in \mathbb{R}^{|V|}\) specify a piecewise linear function on a triangulated surface.  A standard discretization of the Laplacian at a vertex \(i\) is given by
\[ \mbox{\hspace{-.75in}}(L u)_i = \frac{1}{2A_i}\sum_j (\cot\alpha_{ij} + \cot\beta_{ij})(u_j-u_i), \]
\begin{window}[0,r,\includegraphics[trim = 0 0 0 0]{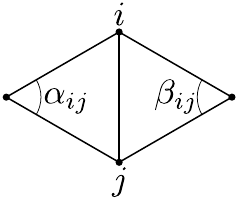},{}]
where \(A_i\) is one third the area of all triangles incident on vertex \(i\), the sum is taken over all neighboring vertices \(j\), and \(\alpha_{ij},\beta_{ij}\) are the angles opposing the corresponding edge~\cite{MacNeal:1949:SPD}.  We can express this operation via a matrix \(L = A^{-1} L_C\), where \(A \in \mathbb{R}^{|V|\times|V|}\) is a diagonal matrix containing the vertex areas and \(L_C \in \mathbb{R}^{|V|\times|V|}\) is the \emph{cotan operator} representing the remaining sum.  Heat flow can then be computed by solving the symmetric positive-definite system
\end{window}
\[ ( A - t L_C) u = u_0 \]
\begin{window}[0,r,\includegraphics[trim = 0 0 0 0]{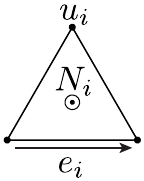},{}]
where \(u_0\) is a Kronecker delta over \(\gamma\) (\ie, one for source vertices; zero otherwise).  The gradient in a given triangle can be expressed succinctly as
\end{window}
\[ \nabla u = \frac{1}{2A_f} \sum_i u_i (N \times e_i) \]
\begin{window}[0,r,\includegraphics[trim = 0 0 0 0]{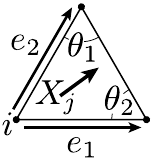},{}]
where \(A_f\) is the area of the face, \(N\) is its unit normal, \(e_i\) is the \(i\)th edge vector (oriented counter-clockwise), and \(u_i\) is the value of \(u\) at the opposing vertex.  The integrated divergence associated with vertex \(i\) can be written as
\end{window}
\[ \nabla \cdot X = \frac{1}{2} \sum_j \cot\theta_1 (e_1 \cdot X_j) + \cot\theta_2 (e_2 \cdot X_j) \]
where the sum is taken over incident triangles \(j\) each with a vector~\(X_j\), \(e_1\) and \(e_2\) are the two edge vectors of triangle \(j\) containing \(i\), and \(\theta_1, \theta_2\) are the opposing angles.  If we let \(d \in \mathbb{R}^{|V|}\) be the vector of (integrated) divergences of the normalized vector field \(X\), then the final distance function is computed by solving the symmetric Poisson problem
\[ L_C \phi = d. \]
Conveniently, this discretization easily generalizes to higher dimensions (\eg, tetrahedral meshes) using well-established discrete operators; see for instance~\cite{Desbrun:2006:DDF}.

\subsubsection{Polygonal Surfaces}
\label{sec:polygonal}

\begin{figure}[t]
\begin{center}
\includegraphics[width=\columnwidth]{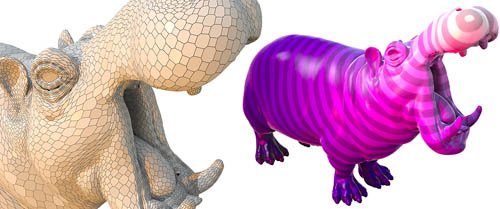}
\end{center}
\caption{Since the heat method is based on well-established discrete operators like the Laplacian, it is easy to adapt to a variety of geometric domains.  Above: distance on a hippo composed of high-degree nonplanar (and sometimes nonconvex) polygonal faces.}
\label{fig:hippo}
\end{figure}

For a mesh with (not necessarily planar) polygonal faces, we use the polygonal Laplacian defined by Alexa and Wardetzky~\shortcite{AW11}.  The only difference in this setting is that the gradient of the heat kernel is expressed as a discrete 1-form associated with half edges, hence we cannot directly evaluate the magnitude of the gradient \(|\nabla u|\) needed for the normalization step (\algo{heatmethod}, step II).  To resolve this issue we assume that \(\nabla u\) is constant over each face, implying that
\[ u_f^T L_f u_f = \int_M |\nabla u|^2 dA = |\nabla u|^2 A_f, \]
where \(u_f\) is the vector of heat values in face \(f\), \(A_f\) is the magnitude of the area vector, and \(L_f\) is the local (weak) Laplacian.  We can therefore approximate the magnitude of the gradient as
\[ |\nabla u|_f = \sqrt{u_f^T L_f u_f/A_f} \]
which is used to normalize the 1-form values in the corresponding face. The integrated divergence is given by \(\mathrm{d}^T M \alpha\) where \(\alpha\) is the normalized gradient, \(\mathrm{d}\) is the coboundary operator and \(M\) is the mass matrix for 1-forms (see \cite{AW11} for details).  These operators are applied in steps I-III as usual.  \figr{hippo} demonstrates distance computed on an irregular polygonal mesh.

\subsubsection{Point Clouds}
\label{sec:point}

\begin{figure}[t]
\begin{center}
\includegraphics[width=\columnwidth]{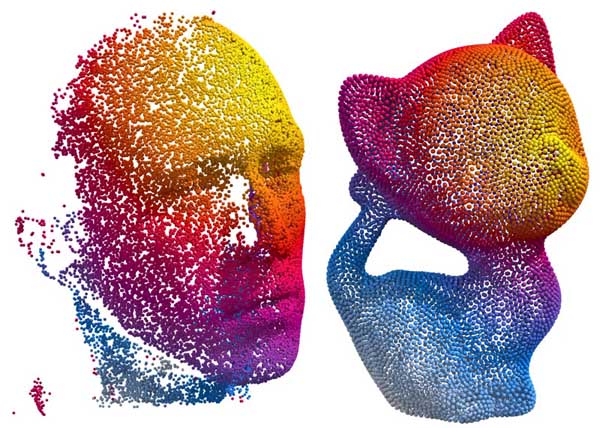}
\end{center}
\caption{The heat method can be applied directly to point clouds that lack connectivity information.  \emph{Left:} face scan with holes and noise.  \emph{Right:} kitten surface with connectivity removed.  Yellow points are close to the source; disconnected clusters (in the sense of Liu \etal) receive a constant red value.}
\label{fig:pointcloud}
\end{figure}

For a discrete collection of point samples \(P \subset \mathbb{R}^n\) of \(M\) with no connectivity information, we solve the heat equation (step I) using the symmetric \emph{point cloud} Laplacian recently introduced by Liu et al.~\shortcite{Liu11}, which extends previous work of Belkin et al.~\shortcite{Belkin09}. In this formulation, the Laplacian is represented by \(A_{\mathcal{V}}^{-1} L_{PC}\), where \(A_{\mathcal{V}}\) is a diagonal matrix of approximate Voronoi areas associated with each point, and \(L_{PC}\) is a symmetric positive semidefinite matrix (see~\cite{Liu11}, Section 3.4, for details).

To compute the vector field \(X = -\nabla u / |\nabla u|\) (step II), we represent the function \(u:P\to\mathbb{R}\) as a height function over approximate tangent planes \(T_p\) at each point \(p\in P\) and evaluate the gradient of a weighted least squares (WLS) approximation of \(u\)~\cite{Nealen:2003:AAS}. To compute tangent planes, we use a moving least squares (MLS) approximation for simplicity -- although other choices might be desirable (see Liu \etal).  The WLS approximation of \(\nabla u\) also provides a linear mapping \(u \mapsto D u\), taking any scalar function \(u\) to its gradient.  To find the best-fit scalar potential \(\phi\) (step III), we solve the linear, positive-semidefinite Poisson equation \( L_{PC} \phi = D^T A_{\mathcal{V}} X\).  The distance resulting from this approach is depicted in \figr{pointcloud}. 

Other discretizations are certainly possible (see for instance~\cite{Luo:2009:AGM}); we picked one that was simple to implement in any dimension. Note that the computational cost of the heat method depends primarily on the \emph{intrinsic} dimension~\(n\) of \(M\), whereas methods based on fast marching require a grid of the same dimension \(m\) as the ambient space~\cite{MemoliSapiro01} -- this distinction is especially important in contexts like machine learning where \(m\) may be significantly larger than \(n\).

\subsubsection{Choice of Time Step}

The accuracy of the heat method relies in part on the choice of time step \(t\).  In the smooth setting, \eqn{conv-Varadhan-2} suggests that smaller values of \(t\) yield better approximations of geodesic distance.  In the discrete setting we instead discover a rather remarkable fact, namely that the limit solution to \eqn{BVP-IE} is purely a function of the \emph{combinatorial} distance, independent of how we discretize the Laplacian (see \appn{graphvaradhan}).  The main implication of this fact is that -- \emph{on a fixed mesh} -- decreasing the value of \(t\) does not necessarily improve accuracy, even in exact arithmetic.  (Note, of course, that we can always improve accuracy by refining the mesh and decreasing \(t\) accordingly.)  Moreover, increasing the value of \(t\) past a certain point produces a smoothed approximation of geodesic distance (\sect{smoothed}).  We therefore seek an optimal time step \(t^*\) that is neither too large nor too small.

Determining a provably optimal expression for \(t^*\) is difficult due to the great complexity of analysis involving the cut locus~\cite{Neel:2004:ACL}.  We instead use a simple estimate that works remarkably well in practice, namely \(t = mh^2\) where \(h\) is the mean spacing between adjacent nodes and \(m > 0\) is a constant.  This estimate is motivated by the fact that \(h^2\Delta\) is invariant with respect to scale and refinement; experiments on a regular grid (\figr{smallT}) suggest that \(m=1\) is the smallest parameter value that recovers the \(\ell_2\) distance, and indeed this value yields near-optimal accuracy for a wide variety of irregularly triangulated surfaces, as demonstrated in \figr{optimalT}.  In this paper the time step
\[ \boxed{t=h^2} \]
is therefore used uniformly throughout all tests and examples, except where we explicitly seek a smoothed approximation of distance, as in \sect{smoothed}.

\subsection{Smoothed Distance}
\label{sec:smoothed}

\begin{figure}[b]
\begin{center}
\includegraphics[width=\columnwidth]{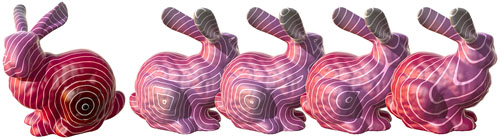}
\end{center}
\caption{A source on the front of the bunny results in nonsmooth cusps on the opposite side.  By running heat flow for progressively longer durations \(t\), we obtain smoothed approximations of geodesic distance \emph{(right)}.}
\label{fig:smoothing}
\end{figure}

\begin{figure}[t]
\begin{center}
\includegraphics[width=.8\columnwidth]{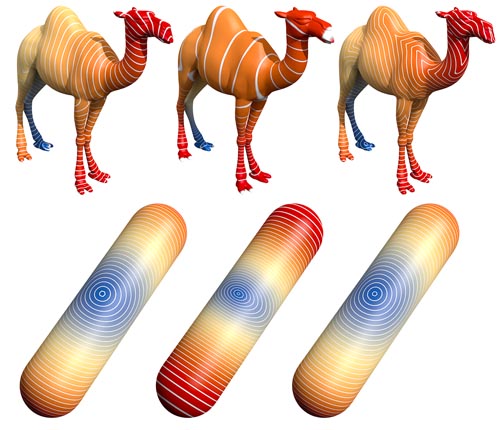}
\end{center}
\caption{\emph{Top row:} our smooth approximation of geodesic distance (left) and biharmonic distance (middle) both mitigate sharp ``cusps'' found in the exact distance (right), but notice that isoline spacing of the biharmonic distance can vary dramatically.  Bottom row: biharmonic distance (middle) tends to exhibit elliptical level lines near the source, while our smoothed distance (left) maintains isotropic circular profiles as seen in the exact distance (right).}
\label{fig:biharmonic}
\end{figure}

Geodesic distance fails to be smooth at points in the \emph{cut locus}, \ie, points at which there is no unique shortest path to the source -- these points appear as sharp cusps in the level lines of the distance function.  Non-smoothness can result in numerical difficulty for applications which need to take derivatives of the distance function \(\phi\) (\eg, level set methods), or may simply be undesirable aesthetically.

Several distances have been designed with smoothness in mind, including diffusion distance~\cite{Coifman06}, commute-time distance~\cite{Fouss:2007:RWC}, and biharmonic distance~\cite{Lipman:2010:BD} (see the last reference for a more detailed discussion).  These distances satisfy a number of important properties (smoothness, isometry-invariance, \etc), but are poor approximations of true geodesic distance, as indicated by uneven spacing of isolines (see \figr{biharmonic}, middle).  They can also be expensive to evaluate, requiring either a large number of Laplacian eigenvectors (\(\sim\!150-200\) in practice) or the solution to a linear system at each vertex.

In contrast, one can rapidly construct smoothed versions of geodesic distance by simply applying the heat method for large values of \(t\) (\figr{smoothing}).  The computational cost remains the same, and isolines are evenly spaced for any value of \(t\) due to normalization (step II).  Note that the resulting smooth distance function is isometrically (but not conformally) invariant since it depends only on the intrinsic Laplace--Beltrami operator.

Interestingly enough, existing smooth distance functions can also be understood in terms of time-discrete heat flow.  In particular, the commute-time distance \(d_C\) and biharmonic distance \(d_B\) can be expressed in terms of the harmonic and biharmonic Green's functions \(g_C\) and \(g_B\):
\[
   \begin{array}{rcl}
      d_C(x,y)^2 &=& g_C(x,x)-2g_C(x,y)+g_C(y,y),\\
      d_B(x,y)^2 &=& g_B(x,x)-2g_B(x,y)+g_B(y,y).
   \end{array}
\]
On a manifold of constant sectional curvature the sum \(g(x,x) + g(y,y)\) is constant, hence the commute-time and biharmonic distances are essentially a scalar multiple of the harmonic and biharmonic Green's functions (respectively), which can be expressed via one- and two-step backward Euler approximations of heat flow:
\[
   \begin{array}{rcl}
      g_C &=& \lim_{t \rightarrow \infty} (\id - t\Delta)^\dagger \delta,\\
      g_B &=& \lim_{t \rightarrow \infty} (\id - 2t\Delta + t^2\Delta^2)^\dagger \delta.
   \end{array}
\]
(Here {\small\(\dagger\)} denotes the pseudoinverse.)  Note that for finite \(t\) the identity operator acts as a regularizer, preventing a logarithmic singularity. For spaces with variable curvature, the Green's functions provide only an approximation of the corresponding distance functions.

\subsection{Boundary Conditions}
\label{sec:boundary}

If one is interested in the exact distance, either vanishing Neumann or Dirichlet conditions suffice since this choice does not affect the behavior of the smooth limit solution (see \cite{Renesse:2003:HKC}, Corollary 2 and \cite{Norris:1997:HKA}, Theorem 1.1, respectively).  Boundary conditions do however alter the behavior of our smoothed geodesic distance (\ie, large \(t\)) -- \figr{bcs} illustrates this behavior.  Although there is no well-defined ``correct'' behavior for the smoothed solution, we advocate the use of \emph{averaged} boundary conditions obtained as the mean of the Neumann solution \(u_N\) and the Dirichlet solution \(u_D\), \ie, \(u = \frac{1}{2}(u_N+u_D)\) -- these conditions tend to produce isolines that are not substantially influenced by the shape of the boundary.  The intuition behind this behavior is again based on interpreting heat diffusion in terms of random walks: zero Dirichlet conditions absorb heat, causing walkers to ``fall off'' the edge of the domain.  Neumann conditions prevent heat from flowing out of the domain, effectively ``reflecting'' random walkers.  Averaged boundary conditions mimic the behavior of a domain without boundary: the number of walkers leaving equals the number of walkers returning.  \figr{maze} shows how boundary conditions affect the behavior of geodesics in a path-planning scenario.    

\begin{figure}
\begin{center}
\includegraphics[width=.8\columnwidth]{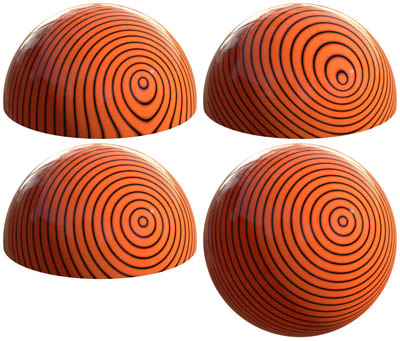}
\end{center}
\caption{Effect of Neumann (top-left), Dirichlet (top-right) and averaged (bottom-left) boundary conditions on smoothed distance.  Note that averaged conditions mimic the behavior of the same surface without boundary.}
\label{fig:bcs}
\end{figure}

\begin{figure}
\begin{center}
\includegraphics[width=.8\columnwidth]{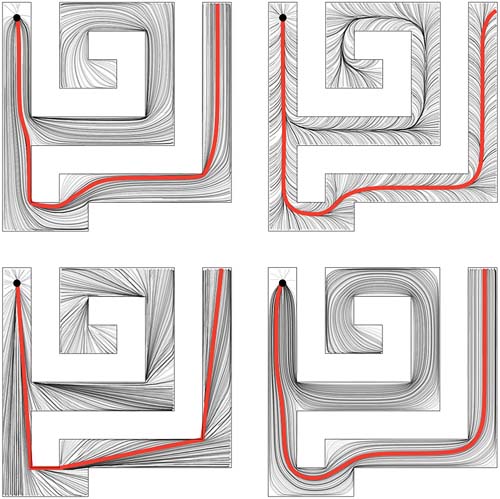}
\end{center}
\caption{For path planning, the behavior of geodesics can be controlled via boundary conditions and the integration time \(t\).  \emph{Top-left:} Neumann conditions encourage boundary adhesion.  \emph{Top-right:} Dirichlet conditions encourage avoidance.  \emph{Bottom-left:} small values of \(t\) yield standard straight-line geodesics.  \emph{Bottom-right:} large values of \(t\) yield more natural trajectories.}
\label{fig:maze}
\end{figure}

\section{Comparison}
\label{sec:comparison}

\begin{figure*}
\begin{center}
\includegraphics[width=\linewidth]{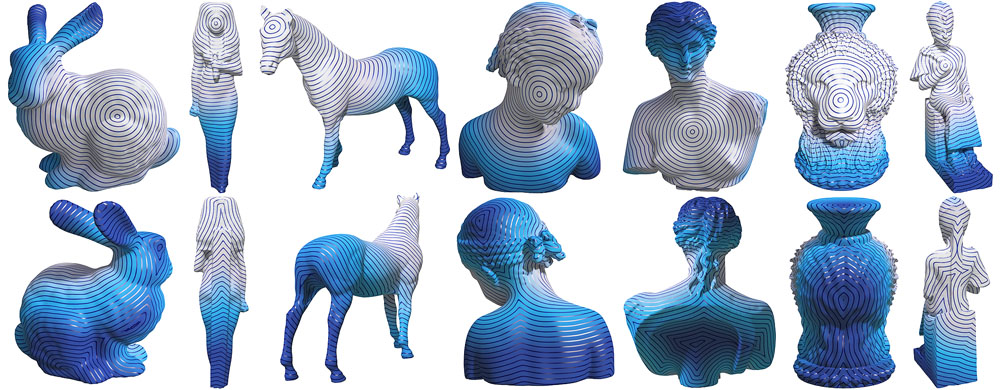}
\end{center}
\caption{Meshes used to test performance and accuracy (see \tabl{zoo}).  Left to right: \textsc{Bunny}, \textsc{Isis}, \textsc{Horse}, \textsc{Bimba}, \textsc{Aphrodite}, \textsc{Lion}, \textsc{Ramses}.}
\label{fig:zoo}
\end{figure*}

\begin{table*}[t]
\tbl{Comparison with fast marching and exact polyhedral distance.  Best speed/accuracy in \textbf{bold}; speedup in \textbf{\textcolor{orange}{orange}}.}{
\begin{tabular}{rc|cccc|ccc|c}
\hline
\textsc{Model} & \textsc{Triangles} & \multicolumn{4}{c}{\textsc{Heat Method}} & \multicolumn{3}{c}{\textsc{Fast Marching}} & \textsc{Exact} \\
\hline
& & \textsc{Precompute} & \textsc{Solve} & \textsc{Max Error} & \textsc{Mean Error} & \textsc{Time} & \textsc{Max Error} & \textsc{Mean Error} & \textsc{Time} \\
\hline
\textsc{Bunny}      & 28k  & \textbf{0.21s}  & \textbf{0.01s} \textbf{\textcolor{orange}{(28x)}}  & 3.22\%  &  \textbf{1.12\%}  &  0.28s  &  \textbf{1.06\%}  &   1.15\%          &   0.95s \\
\textsc{Isis}       & 93k  & \textbf{0.73s}  & \textbf{0.05s} \textbf{\textcolor{orange}{(21x)}}  & 1.19\%  &  \textbf{0.55\%}  &  1.06s  &  \textbf{0.60\%}  &   0.76\%          &   5.61s \\
\textsc{Horse}      & 96k  & \textbf{0.74s}  & \textbf{0.05s} \textbf{\textcolor{orange}{(20x)}}  & 1.18\%  &  \textbf{0.42\%}  &  1.00s  &  \textbf{0.74\%}  &   0.66\%          &   6.42s \\
\textsc{Kitten}     & 106k & \textbf{1.13s}  & \textbf{0.06s} \textbf{\textcolor{orange}{(22x)}}  & 0.78\%  &  \textbf{0.43\%}  &  1.29s  &  \textbf{0.47\%}  &   0.55\%          &  11.18s \\
\textsc{Bimba}      & 149k & \textbf{1.79s}  & \textbf{0.09s} \textbf{\textcolor{orange}{(29x)}}  & 1.92\%  &  0.73\%           &  2.62s  &  \textbf{0.63\%}  &   \textbf{0.69\%} &  13.55s \\
\textsc{Aphrodite}  & 205k & \textbf{2.66s}  & \textbf{0.12s} \textbf{\textcolor{orange}{(47x)}}  & 1.20\%  &  \textbf{0.46\%}  &  5.58s  &  \textbf{0.58\%}  &   0.59\%          &  25.74s \\
\textsc{Lion}       & 353k & \textbf{5.25s}  & \textbf{0.24s} \textbf{\textcolor{orange}{(24x)}}  & 1.92\%  &  0.84\%           & 10.92s  &  \textbf{0.68\%}  &   \textbf{0.67\%} &  22.33s \\
\textsc{Ramses}     & 1.6M & \textbf{63.4s}  & \textbf{1.45s} \textbf{\textcolor{orange}{(68x)}}  & 0.49\%  &  \textbf{0.24\%}  & 98.11s  &  \textbf{0.29\%}  &   0.35\%          & 268.87s \\
\hline
\end{tabular}
}
\label{tab:zoo}
\end{table*}

\subsection{Performance}
\label{sec:performance}

\begin{window}[0,r,\includegraphics[width=.45\columnwidth]{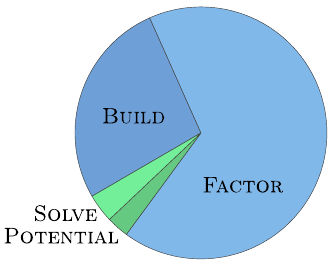},{}]
A key advantage of the heat method is that the linear systems in steps (I) and (III) can be prefactored.  Our implementation uses sparse Cholesky factorization~\cite{Chen:2008:ACS}, which for Poisson-type problems has guaranteed sub-quadratic complexity but in practice scales even better~\cite{Botsch05}; moreover there is strong evidence to suggest that sparse systems arising from elliptic PDEs can be solved in very close to linear time~\cite{Schmitz12,Spielman04}.  Independent of these issues, the amortized cost for problems with a large number of right-hand sides is roughly linear, since back substitution can be applied in essentially linear time.  See inset for a breakdown of relative costs in our implementation.
\end{window}

In terms of absolute performance, a number of factors affect the run time of the heat method including the spatial discretization, choice of discrete Laplacian, geometric data structures, and so forth.  As a typical example, we compared our simplicial implementation (\sect{simplicial}) to the first-order fast marching method of Kimmel \& Sethian~\shortcite{KimmelSethian98} and the exact algorithm of Mitchell \etal~\shortcite{Mitchell87} as described by Surazhsky \etal~\shortcite{Surazhsky05}.  In particular we used the state-of-the-art fast marching implementation of Peyr\'{e} and Cohen~\shortcite{Peyre:2005:GCF} and the exact implementation of Kirsanov~\cite{Surazhsky05}.  The heat method was implemented in ANSI C using a simple vertex-face adjacency list.  All timings were taken on a 2.4 GHz Intel Core 2 Duo machine using a single core -- \tabl{zoo} gives timing information.  Note that even for a single distance computation the heat method outperforms fast marching; more importantly, updating distance via the heat method for new subsets \(\gamma\) is consistently an order of magnitude faster (or more) than both fast marching and the exact algorithm.

\subsection{Accuracy}

\begin{figure}
\begin{center}
\includegraphics[width=.9\columnwidth]{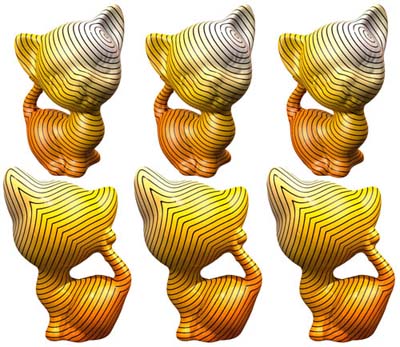}
\end{center}
\caption{Visual comparison of accuracy.  \emph{Left:} exact geodesic distance.  Using default parameters, the heat method \emph{(middle)} and fast marching (right) both produce results of comparable accuracy, here within less than 1\% of the exact distance -- see \tabl{zoo} for a more detailed comparison.}
\label{fig:compare}
\end{figure}

We examined errors in the heat method, fast marching~\cite{KimmelSethian98}, and the exact polyhedral distance~\cite{Mitchell87}, relative to mean edge length \(h\) for a variety of triangulated surfaces.  \figrs{sphere_convergence}{torus_convergence} illustrate the rate of convergence on simple geometries where the smooth geodesic distance can be easily obtained.  Both fast marching and the heat method appear to exhibit linear convergence; it is also interesting to note that the exact \emph{polyhedral} distance provides only quadratic convergence.  Keeping this fact in mind, \tabl{zoo} uses the polyhedral distance as a baseline for comparison on more complicated geometries -- here \textsc{Max} is the maximum error as a percentage of mesh diameter and \textsc{Min} is the mean relative error at each vertex (a convention introduced in~\cite{Surazhsky05}).  Note that fast marching tends to achieve a smaller maximum error, whereas the heat method does better on average.  \figr{compare} gives a visual comparison of accuracy; the only notable discrepancy is a slight smoothing at sharp cusps; \figr{hiragana_a} indicates that this phenomenon does not interfere with the extraction of the cut locus -- here we simply threshold the magnitude of \(\Delta\phi\).  \figr{metric_convergence} plots the maximum violation of metric properties -- both the heat method and fast marching exhibit small approximation errors that vanish under refinement.  Even for the smoothed distance (\(m >> 1\)) the triangle inequality is violated only for highly degenerate geodesic triangles, \ie, all three vertices on a common geodesic.  (In contrast, smoothed distances discussed in \sect{related} satisfy metric properties exactly, but cannot be used to obtain the true geometric distance.)  Overall, the heat method exhibits errors of the same magnitude and rate of convergence as fast marching (at lower computational cost) and is likely suitable for any application where fast marching is presently used.

The accuracy of the heat method might be further improved by considering alternative spatial discretizations (see for instance~\cite{Belkin:2009:DLO,Hildebrandt:2011:LBO}), though again one should note that even the exact polyhedral distance yields only an \(O(h^2)\) approximation.  In the case of the fast marching method, accuracy is determined by the choice of \emph{update rule}.  A number of highly accurate update rules have been developed in the case of regular grids (\eg, HJ WENO~\cite{Jiang:1997:WES}), but fewer options are available on irregular domains such as triangle meshes, the predominant choice being the first-order update rule of Kimmel and Sethian~\shortcite{KimmelSethian98}.  Finally, the approximate algorithm of Surazhsky \etal\ provides an interesting comparison since it is on par with fast marching in terms of performance and produces more accurate results (see \cite{Surazhsky05}, Table 1).  Similar to fast marching, however, it does not take advantage of precomputation and therefore exhibits a significantly higher amortized cost than the heat method; it is also limited to triangle meshes.

\begin{figure}[ht]
\begin{center}
\includegraphics[width=\columnwidth]{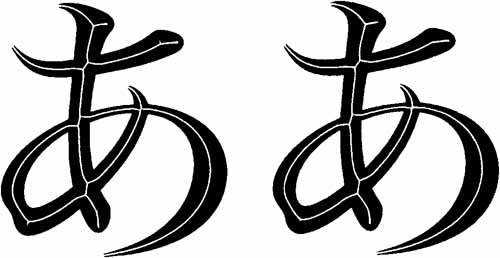}
\end{center}
\caption{Medial axis of the hiragana letter ``a'' extracted by thresholding second derivatives of the distance to the boundary.  Left: fast marching.  Right: heat method.\label{fig:hiragana_a}}
\end{figure}

\subsection{Robustness}

Two factors contribute to the robustness of the heat method, namely (1) the use of an unconditionally stable implicit time-integration scheme and (2) formulation in terms of purely elliptic PDEs.  \figr{robustness} verifies that the heat method continues to work well even on meshes that are poorly discretized or corrupted by a large amount of noise (here modeled as uniform Gaussian noise applied to the vertex coordinates).  In this case we use a moderately large value of \(t\) to investigate the behavior of our smoothed distance; similar behavior is observed for small \(t\) values. \figr{femme} illustrates the robustness of the method on a surface with many small holes as well as long sliver triangles.

\begin{figure}[b]
\begin{center}
\includegraphics[width=.9\columnwidth]{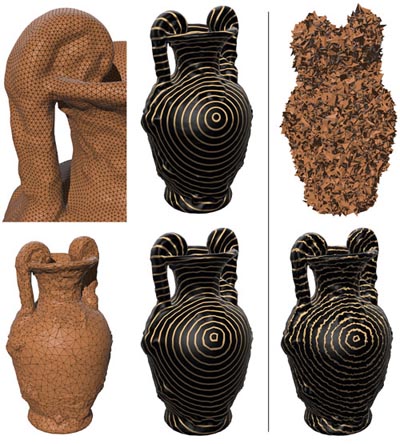}
\end{center}
\caption{Tests of robustness.  \emph{Left:} our smoothed distance (\sect{smoothed}) appears similar on meshes of different resolution. \emph{Right:} even for meshes with severe noise \emph{(top)} we recover a good approximation of the distance function on the original surface \emph{(bottom, visualized on noise-free mesh)}.}
\label{fig:robustness}
\end{figure}

\begin{figure}[ht!]
\begin{center}
\includegraphics[width=\columnwidth]{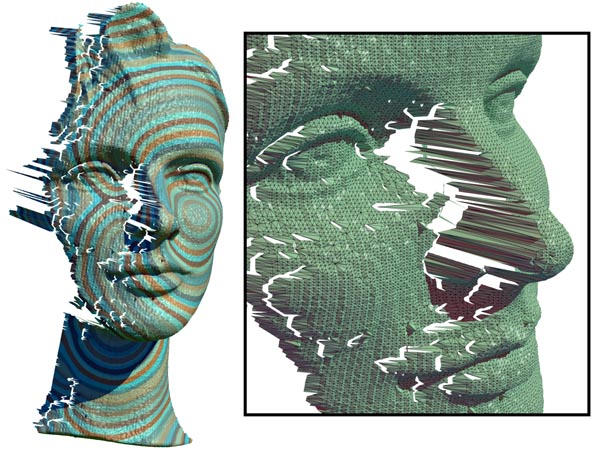}
\end{center}
\caption{Smoothed geodesic distance on an extremely poor triangulation with significant noise -- note that small holes are essentially ignored.  Also note good approximation of distance even along thin slivers in the nose.\label{fig:femme}}
\end{figure}

\section{Conclusion}

The heat method is a simple, general method that can be easily incorporated into a broad class of algorithms.  However, a great deal remains to be explored, including an investigation of alternative discretizations.  Further improvements on the optimal \(t\) value also provide an interesting avenue for future work, though typically the existing estimate already outperforms fast marching in terms of mean error (\tabl{zoo}).  Another obvious question is whether a similar transformation can be applied to a larger class of Hamilton-Jacobi equations.  Finally, \emph{weighted} distance computation might be achieved by simply rescaling the source data.

\appendix

\section{A Varadhan Formula for Graphs}
\label{app:graphvaradhan}

\begin{lemma}
Let \(G=(V,E)\) be the graph induced by the sparsity pattern of any real symmetric matrix \(A\), and consider the linear system
\[ (I-tA)u_t = \delta \]
where \(I\) is the identity, \(\delta\) is a Kronecker delta at a source vertex \(u \in V\), and \(t>0\) is a real parameter.  Then generically
\[ \boxed{\phi = \lim_{t \rightarrow 0} \frac{\log u_t}{\log t}} \]
where \(\phi \in \mathbb{N}_0^{|V|}\) is the \textbf{graph distance} (\ie, number of edges) between each vertex \(v \in V\) and the source vertex \(u\).
\label{lem:smallT}
\end{lemma}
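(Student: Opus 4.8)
The plan is to expand the resolvent \((I-tA)^{-1}\) as a Neumann series in \(t\) and to read off, vertex by vertex, the exact order of vanishing of \(u_t\) as \(t\to 0\). Since \(A\) is fixed, \(I-tA\) is invertible for all \(t\) with \(|t|<1/\rho(A)\) (reciprocal spectral radius), and there
\[ u_t \;=\; (I-tA)^{-1}\delta \;=\; \sum_{k\ge 0} t^k A^k\delta , \]
so each entry \((u_t)_v=\sum_{k\ge 0}(A^k)_{vu}\,t^k\) is a convergent power series in \(t\) with a well-defined lowest-order term.

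The next step is to identify that term. Writing out the matrix product, \((A^k)_{vu}\) is the sum over all length-\(k\) walks \(u=w_0,w_1,\dots,w_k=v\) in \(G\) of the products \(\prod_{i}A_{w_iw_{i+1}}\) (loops permitted, corresponding to diagonal entries of \(A\)). No such walk exists unless \(k\ge\phi_v\), and for \(k=\phi_v\) every contributing walk must be an actual graph geodesic from \(u\) to \(v\): a repeated vertex or a loop would force strictly more than \(\phi_v\) edges to reach \(v\). Hence \((u_t)_v=t^{\phi_v}\bigl(c_v+O(t)\bigr)\) with
\[ c_v \;=\; (A^{\phi_v})_{vu} \;=\; \sum_{\text{geodesics }u\to v}\ \prod_{i}A_{w_iw_{i+1}} . \]

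This is exactly where the hypothesis \emph{generically} enters: \(c_v\) is a polynomial in the off-diagonal entries of \(A\) — in fact a sum of one distinct squarefree monomial per geodesic — hence not identically zero (make all those entries positive and every monomial is positive), so \(c_v\ne 0\) off a measure-zero algebraic set in the entries of \(A\); the finite union over all \(v\) still leaves a measure-zero exceptional set. On its complement \(|(u_t)_v|\sim|c_v|\,t^{\phi_v}\) as \(t\to 0^+\), so
\[ \frac{\log|(u_t)_v|}{\log t}\;=\;\phi_v\;+\;\frac{\log|c_v|+O(t)}{\log t}\ \xrightarrow[\,t\to 0^+\,]{}\ \phi_v , \]
because \(\log t\to-\infty\); assembling the coordinates gives \(\lim_{t\to 0}\log u_t/\log t=\phi\). (I assume \(G\) connected; otherwise \(\phi_v=\infty\) and \((u_t)_v\equiv 0\) for \(v\) outside the source component. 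For the Laplacian-type matrices of interest \((I-tA)^{-1}\) is entrywise positive for small \(t\), so \(c_v>0\) and the absolute values can be dropped.)

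The sole real obstacle is this genericity step, i.e.\ excluding exact cancellation among the geodesic contributions that would kill \(c_v\) and push the leading order up to the next exponent \(k>\phi_v\) with \((A^k)_{vu}\ne 0\), making the limit overshoot the graph distance. The distinct-monomials/positivity observation disposes of it in general; I would additionally remark that for the mass-lumped discrete Laplacians used elsewhere in the paper the edge weights keep a fixed sign along any path, so \(c_v\ne 0\) holds outright — which is precisely why, on a fixed mesh, the limit solution of \eqn{BVP-IE} depends only on combinatorial distance regardless of the Laplacian chosen.
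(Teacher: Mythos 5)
Your proof is correct and follows essentially the same route as the paper: expand \(u_t=(I-tA)^{-1}\delta\) as a Neumann series, identify the leading term \(t^{\phi_v}(A^{\phi_v})_{vu}\) at order equal to the graph distance, show the tail is \(O(t)\) relative to it, and conclude that \(\log(u_t)_v/\log t\to\phi_v\). Your walk-counting argument that \((A^{\phi_v})_{vu}\) is a sum of distinct squarefree monomials, and hence nonzero off a measure-zero set, is a welcome sharpening of the genericity hypothesis that the paper invokes only implicitly by declaring \(s_0\) to be ``the first nonzero term.''
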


\begin{proof}
Let \(\sigma\) be the operator norm of \(A\).  Then for \(t < 1/\sigma\) the matrix \(B:=I-tA\) has an inverse and the solution \(u_t\) is given by the convergent Neumann series \(\sum_{k=0}^\infty t^k A^k \delta.\)  Let \(v \in V\) be a vertex \(n\) edges away from \(u\), and consider the ratio \(r_t := |s|/|s_0|\) where \(s_0 := (t^n A^n \delta)_v\) is the first nonzero term in the sum and \(s = (\sum_{k=n+1}^\infty t^k A^k \delta)_v\) is the sum of all remaining terms.  Noting that \(|s| \leq \sum_{k=n+1}^\infty t^k ||A^k \delta|| \leq \sum_{k=n+1}^\infty t^k \sigma^k,\) we get
\[r_t \leq \frac{t^{n+1}\sigma^{n+1}\sum_{k=0}^\infty t^k \sigma^k}{t^n(A^n \delta)_v} = c\frac{t}{1-t\sigma},\]
where the constant \(c := \sigma^{n+1}/(A^n \delta)_v\) does not depend on \(t\).  We therefore have \(\lim_{t \rightarrow 0} r_t = 0\), \ie, only the first term \(s_0\) is signifcant as \(t\) goes to zero.  But \(\log s_0 = n \log t + \log (A^n \delta)_v\) is dominated by the first term as \(t\) goes to zero, hence \(\log (u_t)_v/\log t\) approaches the number of edges \(n\).
\end{proof}

Numerical experiments such as those depicted in \figr{smallT} agree with this analysis.

\begin{figure}[h]
\begin{center}
\includegraphics[width=\columnwidth]{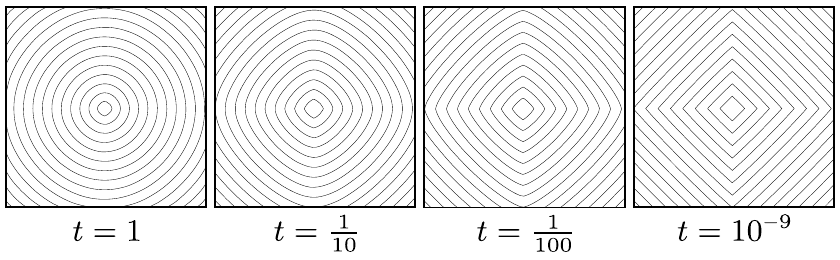}
\caption{Isolines of \(\log u_t/\log t\) computed in exact arithmetic on a regular grid with unit spacing (\(h=1\)).  As predicted by \lemm{smallT}, the solution approaches the combinatorial distance as \(t\) goes to zero.\label{fig:smallT}}
\end{center}
\end{figure}

\begin{figure}
\begin{center}
\includegraphics[width=\columnwidth]{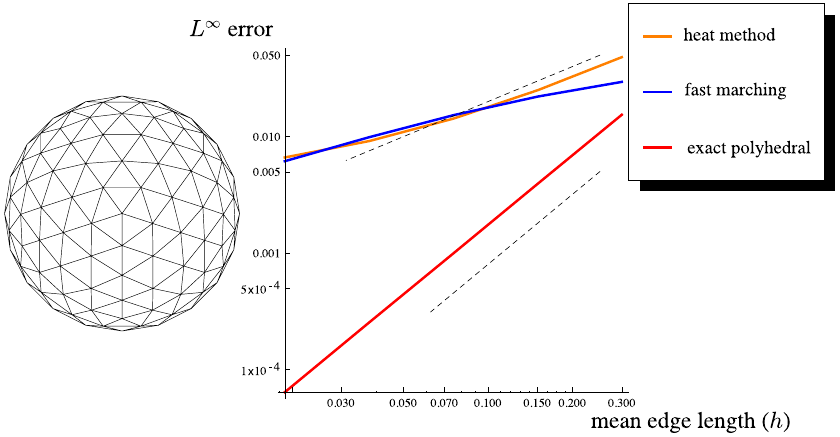}
\end{center}
\caption{\(L^\infty\) convergence of distance functions on the unit sphere with respect to mean edge length.  As a baseline for comparison, we use the exact distance function \(\phi(x,y) = \cos^{-1}(x \cdot y)\).  Linear and quadratic convergence are plotted as dashed lines for reference; note that even the exact polyhedral distance converges only quadratically.\label{fig:sphere_convergence}}
\end{figure}

\begin{figure}
\begin{center}
\includegraphics[width=\columnwidth]{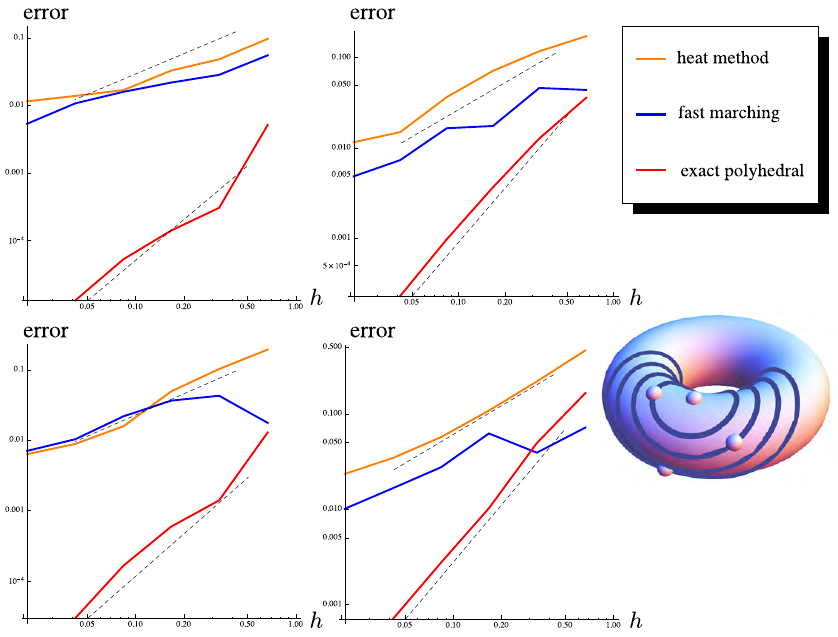}
\caption{Convergence of geodesic distance on the torus at four different test points.  Error is the absolute value of the difference between the numerical value and the exact (smooth) distance; linear and quadratic convergence are plotted as dashed lines for reference. \emph{Right:} test points visualized on the torus; dark blue lines are geodesic circles computed via Clairaut's relation.\label{fig:torus_convergence}}
\end{center}
\end{figure}

\begin{figure}
\begin{center}
\includegraphics[width=\columnwidth]{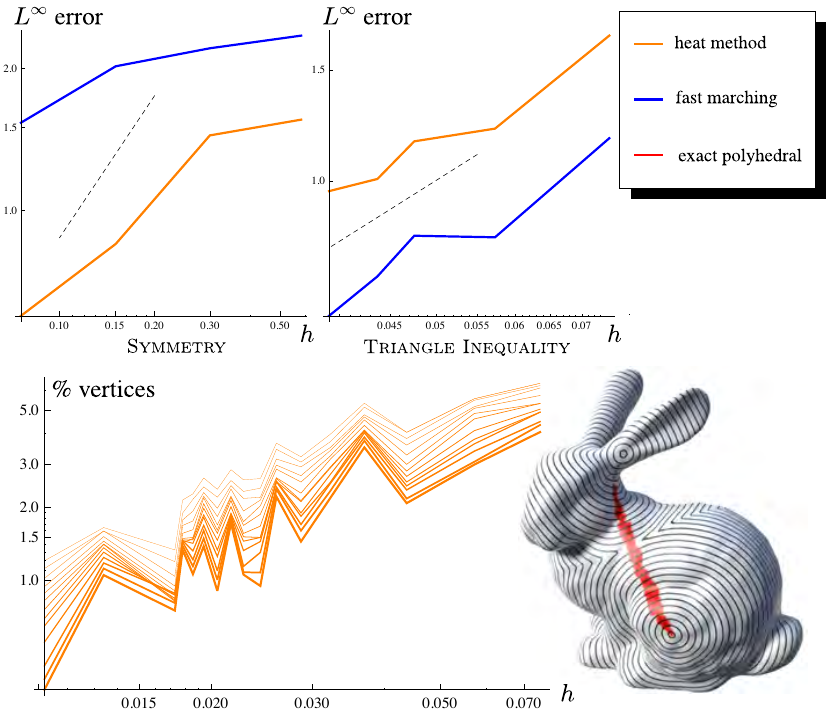}
\caption{Numerical approximations of geodesic distance exhibit small violations of metric properties that vanish under refinement.  Here we examine errors in symmetry \emph{(top left)} and the triangle inequality \emph{(top right)} by checking all pairs or triples (respectively) of vertices on the Stanford bunny and plotting the worst violation as a percent of mesh diameter.  Linear convergence is plotted as a dashed line for reference. \emph{Bottom right:} violation of triangle inequality occurs only for degenerate geodesic triangles, \ie, all three vertices along a common geodesic.  Fixing the first two vertices, we plot those in violation in \textbf{\color{red}{red}}.  \emph{Bottom left:} percent of vertices in violation; letting \(t=mh^2\), each curve corresponds to a value of \(m\) sampled from the range \([1,100]\).\label{fig:metric_convergence}}
\end{center}
\end{figure}

\begin{figure}
\begin{center}
\includegraphics[width=\columnwidth]{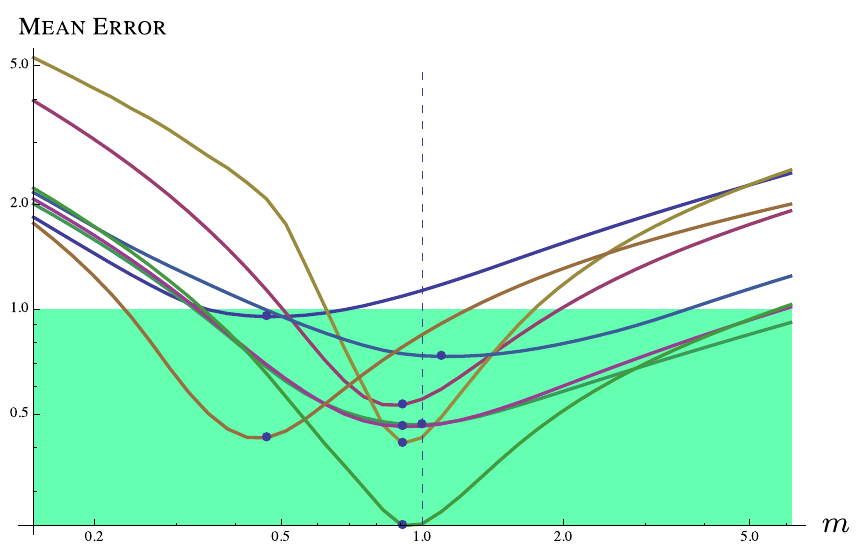}
\end{center}
\caption{Mean percent error as a function of \(m\), where \(t = mh^2\).  Each curve corresponds to a data set from \tabl{zoo}.  Notice that in most examples \(m=1\) (dashed line) is close to the optimal parameter value (blue dots) and yields mean error below \(1\%\).\label{fig:optimalT}}
\end{figure}

\begin{acks}
To be included in final version.
\end{acks}

\received{September 2012}{TBD}

\end{document}